\title{Depth first representations of \texorpdfstring{$k^2$}~-trees} %TODO Please add
\author{Gabriel Carmona\footnote{Corresponding author.}}{Department of Computer Science, University of Pisa, Italy}{gabriel.carmona@phd.unipi.it}{https://orcid.org/0009-0004-1454-2940}{}%TODO mandatory, please use full name; only 1 author per \author macro; first two parameters are mandatory, other parameters can be empty. Please provide at least the name of the affiliation and the country. The full address is optional. Use additional curly braces to indicate the correct name splitting when the last name consists of multiple name parts.
\author{Giovanni Manzini}{Department of Computer Science, University of Pisa, Italy}{giovanni.manzini@unipi.it}{https://orcid.org/0000-0002-5047-0196}{}
\authorrunning{G. Carmona and G. Manzini} %TODO mandatory. First: Use abbreviated first/middle names. Second (only in severe cases): Use first author plus 'et al.'
\keywords{Web graphs, Sparse binary matrices, Succinct tree representations, Compact data structures} %TODO mandatory; please add comma-separated list of keywords
\newcommand{\ignore}[1]{}
\newcommand{\GCremark}[1]{\marginpar{\tiny \flushleft{[GC]~#1}}}
\newcommand\skipvs{skip values\xspace}
\newcommand\identical{interchangeable\xspace}
\newcommand{\rank}{\mbox{\rm\texttt{rank}}}
\newcommand{\select}{\mbox{\rm\texttt{select}}}
\newcommand{\findc}{\mbox{\rm\texttt{find\_close}}}
\newcommand{\findo}{\mbox{\rm\texttt{find\_open}}}
\newcommand{\pppp}{\mbox{\rm\texttt{(())}}}
\newcommand{\opp}{\mbox{\rm\texttt{(}}}
\newcommand{\cpp}{\mbox{\rm\texttt{)}}}
\newcommand{\BP}{B}
\newcommand{\CBP}{B_c} 
\newcommand{\Pru}{S_c}
\newcommand{\Tar}{R_c} % reference
\newcommand{\Len}{L_c}
\pgfplotsset{compat=1.18}
\begin{document}

\maketitle

%TODO mandatory: add short abstract of the document
\begin{abstract}
The \textit{$k^2$-tree} is a compact data structure designed to efficiently store sparse binary matrices by leveraging both sparsity and clustering of nonzero elements. This representation supports efficiently navigational operations and complex binary operations, such as matrix-matrix multiplication, while maintaining space efficiency. The standard $k^2$-tree follows a level-by-level representation, which, while effective, prevents further compression of identical subtrees and it si not cache friendly when accessing individual subtrees. In this work, we introduce some novel depth-first representations of the $k^2$-tree and propose an efficient linear-time algorithm to identify and compress identical subtrees within these structures. 
Our experimental results show that the use of a depth-first representations is a strategy worth pursuing: for the adjacency matrix of web graphs exploiting the presence of identical subtrees does improve the compression ratio, and for some matrices depth-first representations turns out to be faster than the standard $k^2$-tree in computing the matrix-matrix multiplication.

%Our experimental results demonstrate that this new depth-first approach improves matrix-matrix multiplication performance in cases where the matrix remains sparse but exhibits a moderate density of nonzero elements. Additionally, in real-world datasets, a substantial number of identical subtrees can be detected, leading to improved compression. Although this enhanced compression comes at the cost of additional computational overhead, our results suggest that the use of depth-first representations is a strategy worth pursuing. 
\end{abstract}

\section{Introduction}

The need to handle larger and larger datasets has lead to the development of {\em compressed data structures}~\cite{gonzalo} where the data is compressed in a such a way that most operations do not require decompression. 
For storing binary matrices, the \textit{$k^2$-tree} \cite{brisaboa2009k2} is a compact representation designed to save space by leveraging the sparsity of the input and the potential clustering of nonzero elements. For a matrix $M$ this representation supports  operations such as: checking $M[p, q]$, retrieving all $j$ such that $M[p, j] = 1$ (or $M[j, q] = 1$), and finding all $(i, j)$ pairs satisfying $p_1 \leq i \leq p_2$, $q_1 \leq j \leq q_2$, and $M[i, j] = 1$. 
More recently~\cite{arroyuelo2025evaluating,7149294} described efficient algorithms for binary relations and Boolean matrix multiplication.
The $k^2$-tree has demonstrated excellent performance in various domains, including Web Graphs~\cite{BRISABOA2014152, claude2010fast}, Graph Databases~\cite{arroyuelo2025evaluating, 6824442}, Geographic Information Systems~\cite{DEBERNARDO202386,ladra2017scalable}, and Social Networks~\cite{DEBERNARDO202386}.

An intriguing question is whether the $k^2$-tree can be further compressed:
in this paper we investigate the possibility of exploiting the presence of identical subtrees.
Specifically, if two subtrees are identical, compression can be achieved by retaining only one copy and replacing the others with pointers to the retained subtree. However, implementing this approach in the current level-by-level representation of the $k^2$-tree is challenging. To efficiently identify identical subtrees, an alternative depth-first representation appears too be more suitable.

For this reason, we introduce several novel depth-first representations of the $k^2$-tree and we show how to identify and compress identical subtrees in a depth-first representation in linear time. We compare four of our representations with the original $k^2$-tree by evaluating their performance on matrices derived from Web Graphs and Random Matrices. For Web Graph matrices our experiments show that by pruning identical subtrees we can improve over the compression ratio of the original $k^2$-tree, even if at the cost of additional computational overhead. In addition, we show that in some cases depth-first representations improve matrix-matrix multiplication performance, probably because such representations are more cache friendly.  
Overall, our results show that the use of depth-first representations is a strategy worth pursuing and in Section~\ref{sec:conc} we describe many possible lines of further investigation.

% The experimental results indicate that this new depth-first representation improves matrix-matrix multiplication performance in cases where the density of ones remains relatively low but still sparse. Additionally, in such real-world datasets, a significant number of identical subtrees can be detected, leading to better compression. Although this enhanced compression comes at the cost of additional computational overhead, our results suggest that the use of depth-first representations is a strategy worth pursuing. 

\section{Notation}\label{sec:notation}

Let $\Sigma$ be a finite ordered alphabet of size $\sigma$. 
A \emph{string}, (or sequence or array), of length $n$ over alphabet $\Sigma$ is denoted with $S[1,n] \in \Sigma^n$. We write $S[i..j]$ to denote the substring $S[i]S[i+1]\cdots S[j]$ if $1\leq i \leq j \leq n$ or the empty string otherwise. The Suffix Array~\cite{MM93} of $S$ is a permutation of the integers $\{1,\ldots,n\}$ such that for $i=2,\ldots,n$, $S[SA[i-1],n] \prec S[SA[i],n]$, where $\prec$ denotes the lexicographic ordering. The LCP Array~\cite{MM93} $LCP[2,n]$ is an array of integers such that  $i=2,\ldots,n$, $LCP[i]$ is the length of the longest common prefix between $S[SA[i-1],n]$ and $S[SA[i],n]$. Both $SA$ and the $LCP$ array can be computed in $O(n)$ time~\cite{KSB06,karkkainen2009permuted,KoAlu03,KSPP03}.

Given a sequence $S$ and a symbol $c\in\Sigma$,  $\rank_c(S,i)$ returns the number of occurrences of $c$ in $S[1,i]$, and $\select_c(S,i)$ returns the position of the $i$-th occurrence of $c$ in $S$ or $-1$ if such occurrence does not exist. Using additional $o(n)$ bits of space we can preprocess $S$ so that both $\rank$ and $\select$ operations can be computed in $O(1)$ time~\cite{Munro96}. %\GCremark{add time for non-constant alphabet?}

If $\Sigma = \{ \opp, \cpp \} $, the sequence $S$ contains the same number of \opp\ and  \cpp\ symbols and no prefix $S[1,i]$ contains more $\cpp$ than $\opp$ then $S$ is called a {\em balanced parenthesis} sequence. Given a balanced parenthesis sequence $S$, $\findc(S,i)$ (resp. $\findo(S,i)$) returns the index of the closing (opening) parenthesis that matches a given opening (closing) parenthesis $S[i]$. Such operations can be supported $O(1)$ time~\cite{Munro_Raman_2002}.

\section{The canonical representation of \texorpdfstring{$k^2$}~-trees}

Given a square binary matrix $M$, the corresponding $k^2$-tree~\cite{brisaboa2009k2} is built by recursively dividing $M$ using an \textit{MX-Quadtree} strategy \cite{samet2006foundations}, splitting it into $k^2$ equal-sized submatrices, each containing $n^2/k^2$ cells. Each submatrix corresponds to a child of the root node, with its value set to $1$ if at least one cell in the submatrix is $1$; otherwise, its value is $0$. The process is recursively applied to all submatrices with at least one $1$ until all cells have been processed, see Figures~\ref{fig:matrix-example} and~\ref{fig:k2-tree-example}. 
This approach works seamlessly when $ n $ is a power of $ k $. In cases where $ n $ is not a power of $ k $, the matrix can be padded with additional rows and columns of zeros until its size becomes a power of $ k $, ensuring proper tree construction. In the following we will assume $k$ is a constant, so that $k^h$ (the size of the padded matrix) is $O(n)$.

% In the following we will always assume that $n$ is a power of $k$ and in all examples that $k=2$. 

\begin{figure}[htbp]
  \centering
  \includegraphics[scale=0.70]{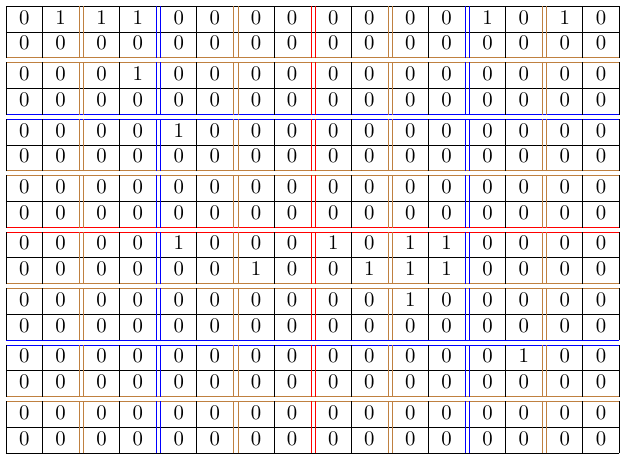}
    \caption{A $16\times16$ matrix split recursively into $2\times 2$ submatrices for the construction of a $k^2$-tree with $k = 2$.}
    \label{fig:matrix-example}
\end{figure}

\begin{figure}[tbp]
  \centering
  \includegraphics[scale=0.70]{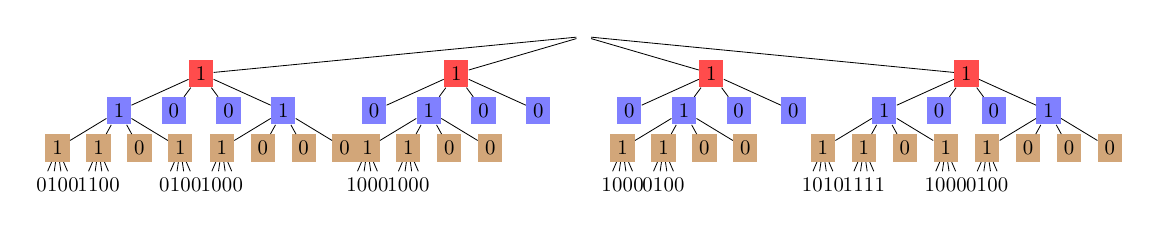}
  \caption{The $k^2$-tree, with $k=2$, representing the matrix of Figure~\ref{fig:matrix-example}. The color of each the internal nodes is related to the size of the submatrix it represents.}
  \label{fig:k2-tree-example}
\end{figure}

By construction, the $k^2$-tree is structured as a $k^2$-ary tree of height $h=\lceil \log_k n\rceil$, where each node, except the root, stores a single bit of information. For the nodes at level $\ell<h$ internal nodes store a $1$ while leaf nodes store a $0$ meaning that the corresponding $k^{h-\ell}\times k^{h-\ell}$ submatrix only contains zero elements. 
At level $h$ all nodes are leaves each one stores an actual bit value of the input matrix. 
In order to represent succinctly the input matrix, it is necessary to represent succinctly the 
above tree. In the same paper, the authors propose a representation consisting of 
two bit arrays:
\begin{itemize}
    \item $T$ (tree): stores all the bits that are not in the last level of the tree. The bits are placed following a level-wise traversal.
    \item $L$ (leaves): stores all the bits of the last level of the tree. Then, it represents some of the real original cells of the adjacency matrix.
\end{itemize}

% a bitvector $T$ containing the bits associated to the nodes at levels $\ell<h$ considered from top to bottom and left to right, and a bitvector 

For example, for the $k^2$-tree in~Figure \ref{fig:k2-tree-example}, the corresponding $T$ and $L$ bit arrays are as follows (colors are used to represent tree levels following Figure~\ref{fig:k2-tree-example}):
\begin{itemize}
    \item $T$: \texttt{{\color{red}1111} {\color{blue}1001 0100 0100 1001} {\color{brown}1101 1000 1100 1100 1101 1000}}
    \item $L$: \texttt{0100 1100 0100 1000 1000 1000 1000 0100 1010 1111 1000 0100}
\end{itemize}
The reason for which the two bit arrays $T$ and $L$ are considered separately is technical. To access specific elements of the input matrix it is necessary to efficiently navigate the $k^2$-tree. The authors observed that we can compute the $i$-th child of a node at position $x$ using the formula $\text{\texttt{child}}(x, i) = \rank_1(x)\cdot k^2+i$~\cite{brisaboa2009k2}. Hence, adding a $o(|T|)$ bits data structure supporting the \texttt{rank$_1$} operation over $T$ in constant time, it is possible to efficiently navigate the tree. 
Later, Brisaboa et al.~\cite{BRISABOA2014152} extended the functionalities of the $k^2$-tree with operations still relaying only the \rank operation on~$T$. There is no need to support the \rank operation over $L$, so it can be stored using a different (simpler) representation.

For an $n\times n$ matrix with $m$ nonzeros, it is shown that the number of bits in this representation is bounded~\cite{brisaboa2009k2} by
\begin{equation} 
k^2m\left(\log_{k^2} \frac{n^2}{m} + O(1)\right)
\end{equation}
However, such value is obtained for pathological inputs and the behavior in practice is much better. Brisaboa et al.~\cite{BRISABOA2014152} in experiments with web graphs, found that the actual space usage is significantly lower and competitive with other graph compression schemes.

\section{Depth-First Traversal Representations}

The Canonical representation is extremely space efficient since it essentially uses a single bit for each node. Considering that it supports tree navigation in constant time it seems hard to improve. However, such representation appears ill suited to exploit the possible presence of identical submatrices in the input matrix. Such submatrices corresponds to identical subtrees but with the Canonical representation the information on every subtree is partitioned into the different tree levels. It is therefore non trivial to detect equal subtrees within the Canonical representation. Another issue of the Canonical representation is that nodes which are close in the tree can be stored far apart: for some computation this layout can generate a large number cache misses. 

As a possible solution to the above shortcomings, in this section we describe alternative representations based on a depth first traversal of the $k^2$-tree: such approach stores together the information regarding a given subtree thus improving the locality of references and simplifying the task of detecting identical subtrees.

\subsection{Plain Depth-First representation}\label{subsec:PDF}

The simplest Depth-First representation consists in traversing the tree in depth-first order; when the visit reaches a non-leaf node $u$, we write the $k^2$ bits associated to $u$'s children. 
The resulting bit array $P$ has length $m k^2$, where $m$ is the number of internal nodes of the $k^2$-tree. See Figure~\ref{fig:k2tree-pdft} for an example.  
It is easy to see that $P$ is a permutation of the bits in $T \cup L$ of the canonical representation; more precisely both $P$ and $T \cup L$ can be partitioned into $m$ blocks of $k^2$ bits, and the blocks of $P$ are a permutation of the blocks of $T \cup L$.

We call this representation the Plain Depth First (PDF) representation since we do not add any additional information to speed up navigation. Note that given a pair of indices $i,j$ we are still able to determine the entry $M[i][j]$ of the matrix represented by the $k^2$-tree, but this requires a depth first traversal of the tree up to the node representing the largest non-empty submatrix containing position $i,j$. Although the access to a single entry is inefficient, some operations involving the whole matrix (such as computing the vector product $y = M x$) do require visiting the whole $k^2$-tree and this can be done with a simple left-to-right scan of the bit array $P$. Such visit is much more cache friendly than a visit of the canonical representation of the same tree

\ignore{
that we will show is one that we call Plain Depth First Traversal (PDFT), where we traverse in depth-first order the nodes of the tree and when we enter a node, we write the bits of the children of that node. We continue like that for each node in this order. For example, for the Figure \ref{fig:k2-tree-example}, we will obtain the following bit vector showed at the Figure \ref{fig:k2tree-pdft}.}

\begin{figure}[hbpt]
    \centering
    \includegraphics[scale=0.7]{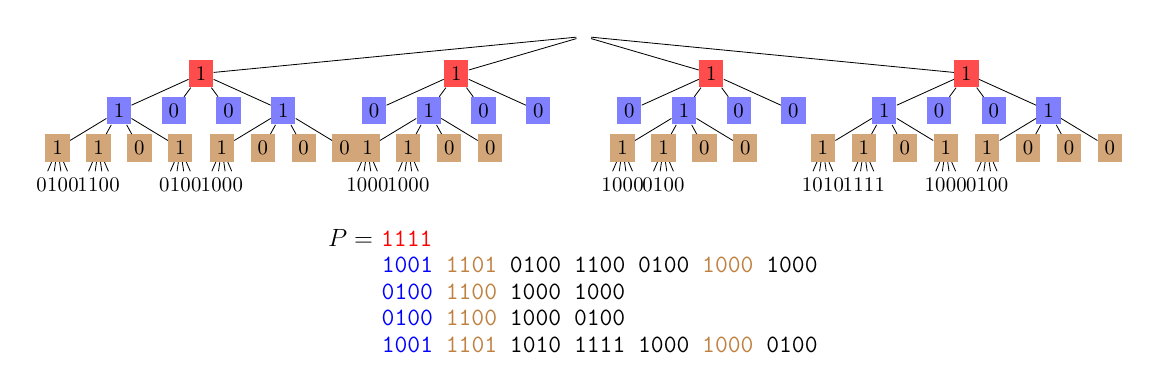}
    \caption{The Plain Depth-First representation of the same tree as Figure~\ref{fig:k2-tree-example}. The representation consists of the bit array $P$ containing the concatenation of the 4-tuples of bits shown above: the first row represents the children of the root node. Each of the subsequent four rows corresponds to the subtrees rooted at each child of the root, listed in depth-first traversal order; different colors represent different tree levels.}
    \label{fig:k2tree-pdft}
\end{figure}

% The advantage of this representation is that it uses the same number of bits as the level-by-level approach, specifically $|T| + |L|$ bits. In this case, we avoid employing a structure to support the \texttt{rank} operation. As a result, we can apply operations that require accessing the entire tree by simply scanning this sequence from left to right. However, the trade-off is that it does not support logarithmic access to individual elements.

%Despite this limitation, it proves useful due to its efficient bit usage and can be advantageous for operations that require only a full scanning of the tree or as part of a hybrid approach. For example, we could use \textit{BP} and \textit{DFUDS} for the upper levels and \textit{PDFT} for the lower levels. Therefore, in the next section, we explore Depth-First representations, which supports efficient navigation.

%{
%\texttt{{\color{red}1111}}
%
%\texttt{{\color{blue}1001} {\color{brown}1101} 0100 1100 1000 {\color{brown}1000} 1000}
%
%\texttt{{\color{blue}0100} {\color{brown}1100} 1000 1000}
%
%\texttt{{\color{blue}0100} {\color{brown}1100} 1000 1000}
%
%\texttt{{\color{blue}1001} {\color{brown}1101} 0100 1100 0100 {\color{brown}1001} 1000}}

\subsection{Enriched Depth-First representation}\label{subsec:EDF}

The main drawback of the PDF representation is that in order to find the starting position in $P$ of the subtree corresponding to the, say, third child of the root note, we must execute a visit of the first two subtrees. The cost of the visit is proportional to the number of nodes in such subtrees so this is a major problem when such subtrees are large: since the visit consists of a linear scan of a subarray of $P$, small subtrees are less of a problem. 

The above observation suggests to ``enrich'' to plain depth-first representation, with additional information that allow us to skip large subtrees without visiting them. The simplest approach is to store after each block of $k^2$ bits associated to an internal node, the size, in terms of number of blocks, of its subtrees, with the exception of the last one. In the example of Figure~\ref{fig:k2tree-pdft}, after the block {\textcolor{red}{\tt 1111}} we would encode the size of the first three subtrees, e.g. 7 (first subtree), 4 (second subtree), and 4 (third subtree). If we need to access the third subtree we skip $7+4 = 11$ blocks, while if we need to access the fourth subtree we skip $7+4+4 = 15$ blocks. Note that we do not store the size of the last subtree (7 in our example), since it is not used for skipping any subtree. We store this information only for the first $\ell$ levels of the tree: the rationale is that as we descend in the tree the number of nodes per level increases, and therefore also the number of ``skip'' values, and, as we observed above, visiting small trees is a relatively fast operation.

We can refine the above idea by choosing a threshold $\tau$ and storing the ``skip'' values only for subtrees of size larger than $\tau$. For example it we set $\tau=6$ we use ``skip'' values only for the root and for the first and last children of the root, whose subtrees have size 7. Since those children have two subtrees, they only need to store a single skip value (4 in our example). Note that using the threshold strategy we do not waste space for skip values for small subtrees which happen to be in the first levels. If $N$ is the total number of nodes and we choose $\tau=f(N)$ we can estimate the overhead of storing the \skipvs as follows. Consider the subtrees which have more than $f(N)$ nodes and do not contain any subtree with more than $f(N)$ nodes. Clearly there are at most $N/f(N)$ such subtrees, and each such subtree will have at most $\log_k n$ ancestors. Not all ancestors are distinct but we can state that the number of node containing \skipvs is at most $O(N\log n /f(N))$.  Assuming we use $O(\log N)$ bits for each skip value, the total overhead for skip values  is $O(N k^2\log N\log n /f(N))$ bits. In our experiments we set $\tau = \sqrt{N}$ and $k=4$ so the overhead is $O(\sqrt{N}\log N\log n)$ bits. %Since $N=|P|+1$, this is $o(|P|)$ when $N=\Omega(\log^2 n)$, that is when the matrix is not pathologically sparse.\GMremark{fix formula}

We call the above representation Enriched Depth First (EDF). Note that there are a few details of the representation that has to be decided during the implementation. The first one is how to encode the skip values: to save space it is preferable to use a variable length code, which one among the many existing alternatives (Elias, Rice, Variable-Byte, etc. see for example~\cite[Chapter~11]{pearls2023}) depends on the desired time-space tradeoff. The second implementation choice is {\em where} to store the ``skip'' values. A first alternative is to store them inside the bit array $P$, immediately after the $k^2$-bit block representing the root of the subtree they refer to. This alternative is cache friendly since the whole representation still consists of a single bit array which is mostly accessed in sequential scans. However, the size of the subtrees, hence the skip values, now depend also on the skip values stored at the lower levels: this makes the construction of the representation more complex since now it has to be done, bottom up and left to right (details in the full paper).

Another possible approach is to store the \skipvs in a separate array $S$. This simplifies the construction and navigation algorithms but 1) it is not cache friendly since navigation accesses two distinct bit arrays, and 2) it requires the storage of some additional information. To see this latter point, observe that in order to skip a subtree $T_i$ we need to skip the portion of the $P$ array containing the encoding of $T_i$ nodes {\em and}
the portion of the $S$ array containing the \skipvs for $T_i$. Hence the size of such portion of $S$ has to be saved as well; this can be done in the array $S$ itself, but requires some additional space.

\subsection{Balanced Parenthesis representation}\label{subsec:bp}

In this section we introduce another depth-first based representation of $k^2$-trees based on Balanced Parenthesis (\textit{BP}) encoding, following the ideas introduced by Munro and Raman~\cite{Munro_Raman_2002}. 
Our representation follows the classical approach used for general trees; however we exploit the spacial structure of the $k^2$-tree introducing an optimization for the last-level nodes. Instead of representing these nodes using parenthesis, we store their actual values in a separate bit array~$L'$.

%ordered from left to right. This bit vector efficiently supports \textit{rank} operation, enabling fast queries and compact storage.

Let $h = \lceil\log_k n\rceil$ denote the tree height. 
Formally, the construction of our BP representation is done as follows. We visit the $k^2$ tree in depth first order and, as suggested by Munro and Raman~\cite{Munro_Raman_2002}, 
we write to a vector $\BP$ a \texttt{(} every time we start the visit of a node and we write a \texttt{)} every time the visit of a node is complete and we go back to the parent node. However, when we reach a level $h-1$ node which is not a leaf, instead of visiting its children we write a pair \texttt{()} and we write the $k^2$ bits associated to its children to the bit array $L'$. See Figure~\ref{fig:k2tree-bp} for an example. 

Note that the subtrees rooted at a level $h-1$ node are represented by either \texttt{()}, if the have no children, or by \texttt{(())} if the have children.  Since in a $k^2$ tree every node has either 4 children or none, the sequence \texttt{(())} cannot represent a subtree rooted at a level $< h-1$. This implies that there is a one-to-one correspondence between the occurrences of the pattern 
\texttt{(())} in the parenthesis vector $\BP$, the subtrees with children at level $h-1$ and the blocks of $k^2$ bits in the $L'$ array. By construction, the correspondence is order preserving in the sense that the $i$-th occurrence of the pattern  \texttt{(())} corresponds to the $i$-th block in $L'$.
This means that to access the bit values stored in a level-$h$ leaf, we need to locate the position $p$ of its parent, and count the number $x$ of occurrences of the pattern  \texttt{(())} in $\BP$ up to position $p$; the desired bit values are stored starting from position $x k^2$ in $L'$.

\begin{figure}[bpt]
    \centering
    \includegraphics[scale=0.7]{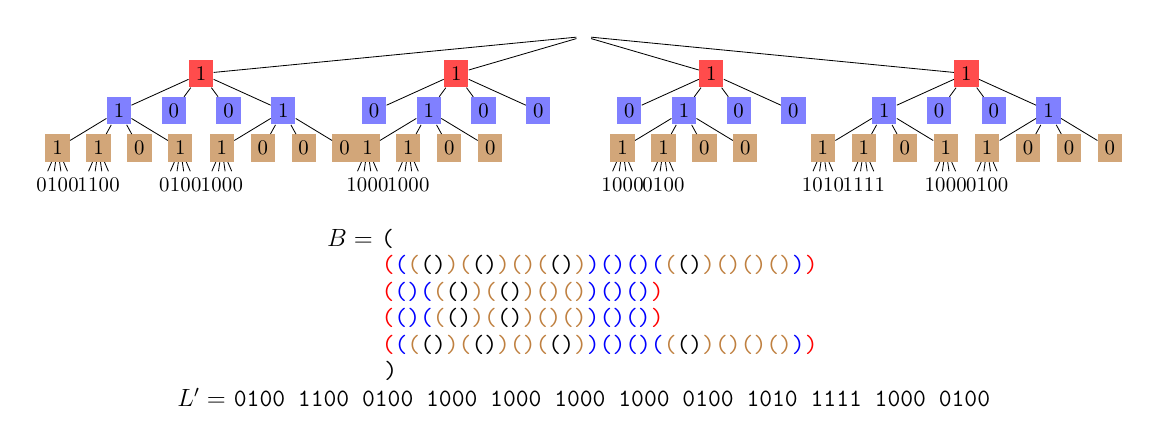}
    \caption{The balanced parenthesis representation of the $k^2$-tree of Figure \ref{fig:k2-tree-example}. The resulting parenthesis vector $\BP$ is the concatenation  of the six rows of parentheses:
    the first and last row are the open and close parenthesis for the root; and the other four rows  represents the subtrees rooted at the four children of the root. Also shown is the bit vector $L'$ containing the values in the bottom level stored according to the depth-first order visit.}
    \label{fig:k2tree-bp}
\end{figure}
%\begin{center}
%\begin{minipage}{6cm}
%\opbk
%
%\opr\opbl\opbr\opbk\clbk\clbr\opbr\opbk\clbk\clbr\opbr\clbr\opbr\opbk\clbk\clbr\clbl\opbl\clbl\opbl\clbl\opbl\opbr\opbk\clbk\clbr\opbr\clbr\opbr\clbr\opbr\clbr\clbl\clr
%
%\opr\opbl\clbl\opbl\opbr\opbk\clbk\clbr\opbr\opbk\clbk\clbr\opbr\clbr\opbr\clbr\clbl\opbl\clbl\opbl\clbl\clr
%
%\opr\opbl\clbl\opbl\opbr\opbk\clbk\clbr\opbr\opbk\clbk\clbr\opbr\clbr\opbr\clbr\clbl\opbl\clbl\opbl\clbl\clr
%
%\opr\opbl\opbr\opbk\clbk\clbr\opbr\opbk\clbk\clbr\opbr\clbr\opbr\opbk\clbk\clbr\clbl\opbl\clbl\opbl\clbl\opbl\opbr\opbk\clbk\clbr\opbr\clbr\opbr\clbr\opbr\clbr\clbl\clr
%
%\clbk
%\end{minipage}
%   
%\end{center}

Let $t$ denote the number of nodes in levels $0,\ldots h-1$ and $\ell$ denotes the number of level-$h$ leaves. It is immediate to see that the parenthesis vector $\BP$ and the bit vector $L'$ have length
\begin{equation}\label{eq:bp}
|\BP| = 2t +\frac{2\ell}{k^2},\qquad
|L'| = \ell.     
\end{equation}
The BP representation therefore takes roughly twice the space used by the Canonical and (Enriched) Depth-First representations that take $t+\ell$ bits plus possibly lower order terms to support fast navigation. The advantages of the BP representation are the locality of reference and the possibility, explored in Section~\ref{sec:subtrees} to compress identical subtrees. 

To ensure constant time navigation with the BP representation we need to support the \findc\ operation in constant time that, as recalled in Section~\ref{sec:notation}, can be supported using $o(|\BP|)$ bits of auxiliary space.  When we reach an internal node at level $h-1$, to reach the corresponding bits in the $L'$ array we need to support the constant time rank operation for the pattern $\texttt{(())}$ on the array $\BP$; this can be achieved by a straightforward modification of the rank data structures for the $\rank$ data structure still using $o(|\BP|)$ bits of auxiliary space.

% can employ the \textit{range min-max tree} by Sadakane and Navarro~\cite{navarro2014fully}, on the parenthesis array $\BP$. For any constant $c>0$ this structure supports all the required operation in constant $O(c)$ time using $O(|\BP|/\log^c(|\BP|))$ bits of space. 

%Note however that if we only need \ldots
%In our case, since we only need to support the operation of finding the $i$-th child, we can replace the \textit{range min-max tree} with an alternative data structure that utilizes auxiliary space of $o(n)$ while supporting the $i$-th child operation in constant time.\GMremark{which one?}

\subsection{Additional depth-first representations}

Another depth-first representation of $k^2$-trees can be obtained using the 
Depth First Unary Degree Sequence ({DFUDS}) representation~\cite{BDM3R}. The {DFUDS} representation builds a balanced parenthesis sequence by traversing the tree in depth-first order and writing the number of children of each node in unary using the symbols \opp\ and \cpp, so a node with two children is represented by \opp\opp\cpp\ 
(a ``super-root'' initial \opp\ symbol is needed to make the sequence balanced). As described, the above representation takes the same space~\eqref{eq:bp} as the BP representation, which is roughly twice the space of the Canonical representation. However, since in a $k^2$ tree all internal nodes have exactly four children, we can use the {\em ultra-succinct} representation by Jansson et al. \cite{jansson2012} to achieve the {\em tree degree entropy} of the $k^2$-tree, which equals one bit per node, as in Canonical representation. However, since the Jansson et al.'s representation is rather complex, we leave the study of this approach to the full paper. 

Finally, we point out that it is possible to combine the BP representation with the EDF or PDF representations. For example, we can fix a threshold $\tau$ and use the BP representation for all subtrees of size at least $\tau$. If an internal node is the root of a subtree of size smaller than $\tau$ we can represent it in $\BP$ with the special sequence $\texttt{(())}$, and store its EDF or PDF representation in an auxiliary array $L'$. We leave the investigation of this mixed approach to the full paper.

\section{Compressing subtrees in the BP representation}\label{sec:subtrees}

Identical submatrices in the original input matrix $M$ can lead to identical subtrees in the $k^2$-tree representing $M$. We now show how to detect identical subtrees and replace them with a ``pointer'' to a previous occurrence, and how to perform navigation operations on this compressed representation. The algorithm for detecting identical subtrees are based on the Suffix Array and LCP array and takes linear time, i.e. time proportional to the size of the balanced parenthesis array~$\BP$. 
We consider the BP representation since it is the one more challenging from the algorithmic point of view. In the full paper we will discuss how to find and represent identical subtrees using the PDF and EDF representation. 

When working with the BP representation, instead of considering identical subtrees in the traditional sense, we will consider the wider concept of \identical subtrees according to the following definition, that, for our convenience, excludes the pathological case of subtrees consisting of a single node.

\ignore{
\subsection{Find and replace identical subtrees}

In the case of a $k^2$-tree, it's common to encounter identical subtrees. Relying solely on the succinct representation of the tree, such as using \textit{BP}, may lead to inefficiencies in space usage due to the presence of these identical subtrees. This is because, \textit{BP}, identical subtrees will correspond to identical substrings. Therefore, detecting and compressing these repeated patterns becomes a crucial optimization problem. For instance, in Figure \ref{fig:k2-tree-example}, we can observe several identical subtrees, some of which are highlighted in Figure \ref{fig:ex-idem-subtree}. In this case, we will not consider leaves as valid subtrees.}

\begin{definition} \label{def:isubtrees}
    Let $T_1$ and $T_2$ be subtrees of height at least 1 of the $k^2$-tree $T$. We say that $T_1$ and $T_2$ are {\em \identical} if their respective balanced parenthesis sequences generated during the visit described in Section~\ref{subsec:bp} are identical.     
\end{definition}

We point out that the balanced parenthesis sequences generated during the visit in Section~\ref{subsec:bp} does not include any information regarding the last level leaves since they are stored in binary form in the vector $L'$. Hence, some subtrees are \identical even if some of the values stored in the last level differ, see for example the subtrees surrounded in red in Figure~\ref{fig:ex-idem-subtree}.

\begin{figure}[t]
    \centering
    \includegraphics[scale=0.7]{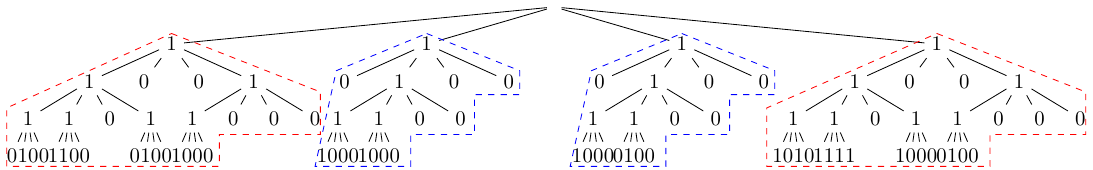}
    \caption{The subtrees surrounded by dashed lines of the same color are \identical according to Definition~\ref{def:isubtrees}. Note that the subtrees surrounded in red are \identical even if the values stored in some of the leaves at the last level are different.}
    \label{fig:ex-idem-subtree}
\end{figure}

\ignore{

In this case, we want to find identical subtrees in the $k^2$-tree. But thanks to the particular properties of the $k^2$-tree, the following theorem emerges.

\begin{theorem}\label{theo:identical}
    If any two subtrees are identical, their roots must be at the same level, and they must each have at least one node at the last level of the $k^2$-tree, except in the case where they are leaves.
\end{theorem}
\begin{proof}
 Let $T_1$ and $T_2$ be two identical subtrees, each with more than one node. Having more than one node implies that there is at least one $1$ in the $k^2$ structure, which ensures the presence of at least one node at the last level of the tree. For the trees to be identical, not only must they each have at least one node at the last level, but their roots must also be at the same level. If the roots are at different levels, the subtrees will have different heights, and consequently, they cannot be identical.   
\end{proof}
}

\begin{lemma} \label{lemma:indentical}
    The subtree starting at position $i$ in $\BP$ is \identical with the subtree starting in position $j$, if and only if, setting $c=\mbox{\rm \texttt{find\_close}}(i)$, the substring $\BP[i..c]$ is equal to $\BP[j..j+c-i]$. 
\end{lemma}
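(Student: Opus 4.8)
The starting point is the observation that, by the construction in Section~\ref{subsec:bp}, the balanced parenthesis sequence of the subtree rooted at the open parenthesis $\BP[i]$ is exactly the contiguous block $\BP[i..\findc(i)]=\BP[i..c]$: the root contributes the outermost \opp\ and \cpp, and everything produced between entering and leaving the root lies strictly inside. Hence, by Definition~\ref{def:isubtrees}, the subtree at $i$ is \identical to the subtree at $j$ if and only if $\BP[i..\findc(i)]$ and $\BP[j..\findc(j)]$ are equal as strings (both subtrees having height at least $1$, so that the definition applies). The plan is therefore to prove that this string equality is equivalent to the condition $\BP[i..c]=\BP[j..j+c-i]$ stated in the lemma.

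For the forward direction I would argue as follows. If the two subtrees are \identical then $\BP[i..\findc(i)]=\BP[j..\findc(j)]$, and equal strings have equal length, so $\findc(i)-i=\findc(j)-j$, i.e. $\findc(j)=j+c-i$. Substituting gives $\BP[j..j+c-i]=\BP[j..\findc(j)]=\BP[i..c]$, which is the claimed substring equality.

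The converse direction is where the only real work lies, and it is the step I expect to be the main obstacle. Assuming $\BP[i..c]=\BP[j..j+c-i]$, I must rule out the possibility that $\BP[j..j+c-i]$ fails to coincide with the true subtree at $j$; equivalently, I must show $\findc(j)=j+c-i$. The key fact is that $\findc$ is a \emph{local} operation: $\findc(\BP,p)$ is obtained by scanning rightward from $p$ while maintaining a balance counter (incremented on \opp, decremented on \cpp) and returning the first position at which the balance drops back to $0$; thus its value depends only on the characters from $p$ up to that match. Now $\BP[i..c]$ is the sequence of a genuine subtree, so its outermost parenthesis $\BP[i]$ matches its last symbol $\BP[c]$; equivalently, the balance profile of $\BP[i..c]$ stays strictly positive until the very end and first returns to $0$ exactly at position $c$. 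Because $\BP[j..j+c-i]$ is character-for-character identical to $\BP[i..c]$, its balance profile is identical, so the scan started at $j$ first returns to $0$ at position $j+c-i$, giving $\findc(j)=j+c-i$. Consequently $\BP[j..\findc(j)]=\BP[j..j+c-i]=\BP[i..c]=\BP[i..\findc(i)]$, the two subtrees have identical parenthesis sequences, and they are \identical by Definition~\ref{def:isubtrees}. The care needed in the converse is precisely to combine the locality of $\findc$ with the primitivity of a subtree's parenthesis sequence, so that mere substring equality on $\BP$ is upgraded to equality of the two full subtree encodings.
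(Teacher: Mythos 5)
Your proof is correct and takes essentially the same route as the paper's: both rest on the observation that the visit of the subtree rooted at $\BP[i]$ generates exactly the contiguous block $\BP[i..\findc(i)]$, so being \identical reduces to equality of these blocks. You additionally make explicit the two details the paper's one-line proof leaves implicit — the length argument in the forward direction and the balance-profile argument showing that substring equality forces $\findc(j)=j+c-i$ in the converse — which is a fleshing-out, not a different approach.
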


\begin{proof}
Note that by construction it is $\BP[i]=\BP[j]=\texttt{(}$. The position $c=\texttt{find\_close}(i)$ is the position of the closed parenthesis matching $\BP[i]$. Hence the visit of the entire subtree starting at position $i$ generates the string $\BP[i..c]$. If such string is identical to $\BP[j..j+c-i]$ then the two subtrees are interchangeable. 
\end{proof}

Because of the above lemma, we can detect all \identical subtrees in linear time as follows. First we compute the suffix array $SA$  and $LCP$ array of the parenthesis sequence $\BP$. For $i=1,\ldots |\BP|-1$, if $\BP[SA[i]] = \texttt{(}$, we set $c=\findc(SA[i])$. Then, if $c-SA[i] \leq  LCP[i]$ by the above lemma the subtree starting at position $SA[i]$ is \identical with the subtree starting at $SA[i-1]$. During this procedure we ignore positions $i$ when $LCP[i] \leq 4$ (or some other larger threshold) since in that case an \identical subtree would be so small that replacing it with a pointer would offer no benefit.

The above algorithm will find {\em all} \identical subtrees but for our purposes this is too much. The problem is that subtrees of \identical subtrees are still \identical but we are only interested in finding {\em maximal} \identical subtrees, i.e. subtrees which are not contained in larger \identical subtrees. Therefore, instead of scanning the $SA$ array we scan the sequence $\BP$: for $j=1,\ldots,|\BP|$ , if $\BP[j] = \texttt{(}$ we set $i=SA^{-1}[j]$ and we check whether $LCP[i] > \texttt{find\_close}(j)-j$. If this is the case and the subtree starting at $j$ is \identical with a previous subtree, we restart the scanning of $\BP$ at position $\texttt{find\_close}(j)+1$ therefore skipping all non-maximal \identical subtrees contained in the subtree starting at $j$.   
Note that if we have $k$ consecutive suffixes $SA[i],\ldots,SA[i+k-1]$ each representing \identical subtrees, then they are all \identical among themselves. In this case we 
select the subtree with the leftmost starting position as the reference subtree and the other subtrees will point to this reference subtree.

Having identified which (maximal) subtrees should be replaced by a pointer, in the following we call them {\em pruned} subtrees, the construction of the Compressed Balanced Parenthesis (CBP) representation is done as follows. 
Let $h = \lceil\log_{k} n\rceil$ denote the number of tree levels, $\ell$ denote the number of level $h$ leaves (whose value are stored in $L'$), and $p$ denote the number of pruned subtrees.
Given the uncompressed sequence of balanced parenthesis $\BP$ we build the compressed sequence $\CBP$ by replacing the subsequence representing each pruned subtree with the pattern \texttt{(())} (note $\CBP$ is still balanced). Since this is the same pattern used to represent internal nodes at level $h-1$, we also introduce a binary vector $\Pru[1,\ell + p]$ such that for each occurrence of the pattern  \texttt{(())} in $\CBP$ we store a 0 in $\Pru$ if it represent an internal node level $h-1$, and we store a 1 if it represents a pruned subtree. 

We also use an integer array $\Tar[1,p]$ such that $\Tar[v]$ is the starting position in $\CBP$ of the reference subtree for the $v$-th pruned subtree. When the navigation reaches at position $i$ of $\CBP$ a \pppp\ pattern which is not at level $h-1$, we compute $r=\rank_{\scriptsize{\pppp}}(\CBP,i)$,  $v=\rank_1(\Pru,r)$. By construction, the pruned subtree starting at position $i$ is the $v$-th pruned subtree, hence $j=\Tar[v]$ is the starting position in $\CBP$ of the reference subtree for the one starting at~$i$.
Finally, since \identical subtrees can have different values stored at level $h$ (i.e. in the bitarray $L'$), when jumping from position $i$ to position $j$ we need to compute the length of the portion of $L'$ corresponding to the tree nodes between position $j$ and position $i$. Such  lengths can be stored in another integer array $\Len[1,p]$ or computed on the fly using an auxiliary binary array containing the unary encoding of the number of level-$h$ for each pruned subtree (details in the full paper).
Figure~\ref{fig:ex-idem-subtree} shows the $k^2$-tree of Figure~\ref{fig:k2tree-bp} with two subtrees pruned and the resulting $\CBP$ sequence.

\begin{figure}[htbp]
    \centering
    \includegraphics[scale=0.7]{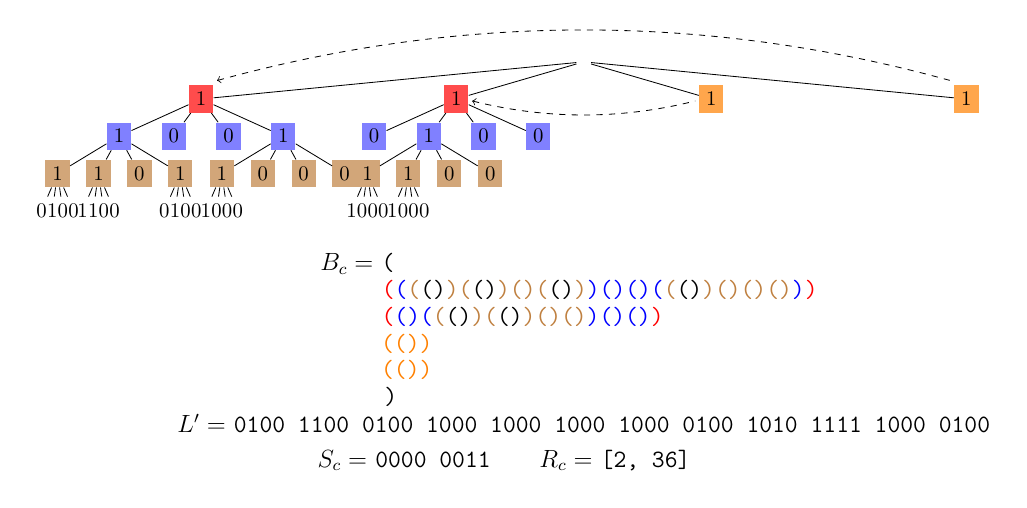}
%
%\begin{center}
%\begin{minipage}{6cm}
%\large
%\opbk
%
%\opr\opbl\opbr\opbk\clbk\clbr\opbr\opbk\clbk\clbr\opbr\clbr\opbr\opbk\clbk\clbr\clbl\opbl\clbl\opbl\clbl\opbl\opbr\opbk\clbk\clbr\opbr\clbr\opbr\clbr\opbr\clbr\clbl\clr
%
%\opr\opbl\clbl\opbl\opbr\opbk\clbk\clbr\opbr\opbk\clbk\clbr\opbr\clbr\opbr\clbr\clbl\opbl\clbl\opbl\clbl\clr
%
%\opo\opo\clo\clo
%
%\opo\opo\clo\clo
%
%\clbk
%\end{minipage}
%\end{center}

\caption{The Compressed Balanced Parenthesis representation for the $k^2$-tree shown in Figure~\ref{fig:k2tree-bp} with two subtrees (in orange) pruned and replaced with pointers to \identical subtrees. The two pointers are explicitly stored in the array $\Tar$}\label{fig:repl-k2tree}
\end{figure}

\ignore{
We introduce an auxiliary bit vector, denoted as $PoL$, which stores a bit for each occurrence of the patterns \texttt{(())} in the \textit{BP} representation. At each corresponding position, $PoL$ contains a $1$ if the pattern points to a subtree and a $0$ if it represents a node at the last level of the tree. The size of $PoL$ is proportional to the total number of pointers plus the number of nodes at the last level and we will need support of \texttt{rank} operation. For instance, in the \textit{BP} sequence obtained in the example, the resulting $PoL$ will be \texttt{00000011}.

To determine whether a pattern at position $i$ corresponds to a pointer or a node at the last level, we first count how many occurrences of the pattern \texttt{(())}, let’s call this count $p$. If $PoL[p] = 1$, the pattern represents a pointer; otherwise, it signifies a node at the last level.

To identify which specific pointer it is, we apply $\text{\texttt{rank}}_1(p)$ on $PoL$, which gives the position of the corresponding pointer in the pointer array $P$. Similarly, to locate which node at the last level it represents, we apply $\text{\texttt{rank}}_0(p)$ on $PoL$, which the result multiply by $k^2$ gives the index of the corresponding value in the array $L'$.

\textbf{Time Complexity}

To analyze the complexity, we break it down into the different parts of the algorithm:

\begin{enumerate}
    \item \textbf{Suffix Array and LCP Computation}: According to \cite{nong2010two, karkkainen2009permuted}, the suffix array and LCP (Longest Common Prefix) array can be computed in linear time relative to the size of the string. Since the string's size is twice the number of nodes, this step remains linear in terms of the number of nodes.  
    \item \textbf{Finding Identical Subtrees}: This step involves scanning the suffix array, which is performed in linear time.  
    \item \textbf{Replacing Identical Subtrees}: This process is achieved by scanning the original sequence and omitting subtrees marked for replacement instead of writing them into the new sequence. As a result, this step is also linear in the number of nodes.  
    \item \textbf{Building Auxiliary Data Structures}: All necessary data structures can be constructed concurrently while generating the new \textit{BP} representation, ensuring that they do not impact the overall time complexity.  
\end{enumerate}
\GCremark{add reference} \GCremark{added}

Thus, the entire algorithm runs in linear time concerning the number of nodes in the $k^2$-tree. Consequently, we achieve the same lower bound as that required to construct the minimal DAG, as established by Downey et al. \cite{downey1980variations}.}

\ignore{
\subsection{Matrix-Matrix multiplication}

To perform operations between two $k^2$-trees that may contain pointers, we use the same algorithm as before, including the technique for locating a pointer. However, during these operations, it is essential to remember the origin of each move. When encountering a pointer, we navigate to the indicated position and, after processing that subtree, we must return to the original position to continue the algorithm.

It is important to note that the resulting $k^2$-tree will not contain pointers. Therefore, after completing the operations, we need to apply the algorithm to create the pointers in the new tree.

Due that all this modifications can be done on constant time, the complexity of all this operations are maintain.}

\section{Experimental Results}

In this section we report some preliminary results on our depth-first based representations of $k^2$-trees. To evaluate the effectiveness of a representation we need to assess not only its compression performance but also its efficiency in supporting matrix operations. To this end we tested our representations on the matrix-matrix multiplication operation for which there exists an efficient implementation for the canonical $k^2$ representation provided by Arroyuelo et al.~\cite{arroyuelo2025evaluating}. Such implementation is based on the classical divide-and-conquer recursive procedure: Given two matrices $A$ and $B$, if we logically partition them into four submatrices of equal size
\[
A = \begin{pmatrix}
A_0 & A_1\\
A_2 & A_3
\end{pmatrix}, \quad
B = \begin{pmatrix}
B_0 & B_1\\
B_2 & B_3
\end{pmatrix},
\]
then the product $A\times B$ can be obtained by the recursive formula
\[
A \times B = \begin{pmatrix}
A_0 \times B_0 + A_1 \times B_2 & A_0 \times B_1 + A_1 \times B_3 \\
A_2 \times B_0 + A_3 \times B_2 & A_2 \times B_1 + A_3 \times B_3
\end{pmatrix}.
\]
The above algorithm is particularly suitable when the input matrices are represented by $k^2$-tree when $k=2$, since the submatrices $A_i$'s and $B_i$'s correspond to the four children of the root node. 
Another advantage of the $k^2$-tree representation is its ability to efficiently handle zero submatrices. If any submatrix is entirely composed of zeros, its product with another submatrix will also be a zero matrix, eliminating unnecessary computations. Additionally, when summing a product $A_i \times B_j$ with a zero matrix, the operation can be optimized by directly referencing the nonzero product, avoiding redundant memory operations.

The recursive structure and the presence of multiple sums and products make this computation a good benchmark for the different $k^2$-tree implementations. Note that in all implementations the product matrix is returned in compressed form, so the multiplication algorithm also measure the efficiency of the construction algorithm.

We run our experiments on an Ubuntu machine with 394 GiB of internal memory and an Intel(R) Xeon(R) Gold 6132 CPU @ 2.60GHz.
We compare the following implementations:

\begin{itemize}
    \item \texttt{K2-TREE}\footnote{\href{https://github.com/adriangbrandon/rpq-matrix/tree/main}{\texttt{K2-TREE}: https://github.com/adriangbrandon/rpq-matrix}} this is the canonical level-by-level $k^2$ implementation by Arroyuelo et al. \cite{arroyuelo2025evaluating}.
    \item \texttt{K2-PDF} and \texttt{K2-EDF}\footnote{\href{ https://github.com/acubeLab/k2tree/tree/main}{\texttt{K2-PDF} \& \texttt{K2-EDF}: https://github.com/acubeLab/k2tree/}} implementations based on the Plain Depth First and Enriched Depth First representations described in Section~\ref{subsec:PDF} and~\ref{subsec:EDF}
    respectively. For the enriched representation we used as threshold the square root of the number of nodes. 
    \item \texttt{K2-BP} and \texttt{K2-CBP}\footnote{\href{https://github.com/Yhatoh/k2tree/tree/better_version}{\texttt{K2-BP} \& \texttt{K2-CBP}: https://github.com/Yhatoh/k2tree}} implementations based on the Balanced Parenthesis representation and the compressed Balanced Parenthesis representation described in Section~\ref{subsec:bp} and~\ref{sec:subtrees}. Both representations are implemented using the sdsl-lite library~\cite{gbmp2014sea}.
\end{itemize}

\noindent
We tested all implementations using two different datasets:
\begin{itemize}  
    \item \textbf{WebGraph} \cite{BRSLLP, BoVWFI}: The compression of the adjacency matrix of web graphs was one of the motivations for introducing $k^2$-trees. We tested four large web graphs: IN-2004, INDOCHINA-2004 (ID-2004), ARABIC-2005 (AR-2005), and UK-2005.  Information about each matrix is provided in Table~\ref{tab:info-webgraph}. Since matrices have different sizes, for this dataset matrix multiplication was performed by squaring each matrix.
    \item \textbf{Random Matrices}: Following the approach of Arroyuelo et al. \cite{arroyuelo2025evaluating}, we generated matrices with varying densities to assess the performance of matrix-matrix multiplication by measuring computation time. Specifically, we tested matrices of size $10^3 \times 10^3$ with densities $2 \times 10^{-1}$, $10^{-1}$, $10^{-2}$, $10^{-3}$, and $10^{-4}$. For each density, we generated 10 matrices with uniformly distributed values. We then performed matrix-matrix multiplication on each pair and computed the average computation time. % Due that these are random matrices, we will not take account the compression.
\end{itemize}

\begin{table}[htbp]
    \begin{center}
    \begin{tabular}{|l|r|r|r|r|}
       \hline
       Dataset & Matrix size & \# Nonzero & Nodes / \# Nonzero & Density\\
       \hline
       IN-2004 & 1,382,908 & 16,917,053 & 1.41 & $8\cdot 10^{-6}$\\
       \hline
       ID-2004 & 7,414,866 & 194,109,311 & 1.06 & $3\cdot 10^{-6}$\\
       \hline
       AR-2005 & 22,744,080 & 639,999,458 & 1.31 & $1\cdot 10^{-6}$\\
       \hline
       UK-2005 & 39,459,925 & 936,364,282 & 1.34 & $6\cdot 10^{-7}$\\
       \hline
    \end{tabular}        
    \end{center}
    \caption{Information on size, number of nonzero, density of nonzero elements for the matrices in the Web Graph dataset.}
    \label{tab:info-webgraph}
\end{table}

We first report the results for the Web Graph dataset. As shown in Table~\ref{tab:webgraph_result} the best compression is achieved by \texttt{K2-CBP} with the exception of IN-2004 where the best compression is obtained by \texttt{K2-PDF} and \texttt{K2-EDF}; this shows that web graph matrices indeed contain a large number of repeated subtrees. Comparing \texttt{K2-EDF} with \texttt{K2-PDF} shows that the space used by skip values is relatively modest, and both algorithms take less space than \texttt{K2-TREE}. 
Notice that \texttt{K2-BP} is, as expected, not space efficient and exploiting repeated subtrees reduced the overall size by a factor three. This suggests that removing repeated subtrees in the PDF and EDF representations can lead to similar significant saving.

Looking at the running times we notice that, thanks to the skip values, \texttt{K2-EDF} is always faster than \texttt{K2-PDF} by a factor roughly two. The algorithm \texttt{K2-CBP} is the slowest, probably because of the overhead of locating the reference for pruned subtrees. We plan to improve the times/space tradeoff by pruning only subtrees larger than an appropriate threshold. The comparison  with  \texttt{K2-TREE} is less clear cut. For ID-2004 \texttt{K2-EDF} is much faster than \texttt{K2-TREE} but for the other matrices \texttt{K2-TREE} is faster, significantly so for UK-2005. We conjecture this can be due to the fact that UK-2005 is significantly less dense than the other matrices. UK-2005 also has large submatrices, and \texttt{K2-TREE} appears to have a more efficient strategy for partitioning a matrix and detecting submatrices which are not to be considered since they are going to be multiplied by zero submatrices. More experiments and analysis are clearly required for a complete understanding.

\ignore{
\texttt{K2-PDF} and \texttt{K2-EDF} outperform \texttt{K2-TREE} for ID-2004 while using slightly less space. This is primarily due to two factors: first, this matrix is the densest among the three cases, and second, its clustered distribution results in larger subtrees, allowing \texttt{K2-PDF} and \texttt{K2-EDF} to exploit this property effectively.

Our solutions, \texttt{K2-PDF}, \texttt{K2-EDF}, \texttt{K2-BP}, and \texttt{K2-CBP}, are significantly slower compared to the \texttt{K2-tree}.
This is primarily due to the large submatrices, where the strategy of scanning or skipping submatrices introduces unnecessary overhead. That is why the representation \texttt{K2-EDF} is significally faster than \texttt{K2-PDF}, because \texttt{K2-EDF} has a faster method to jump between submatrices. In contrast, the \texttt{K2-tree} does not incur this cost, as it directly divides the four submatrices efficiently. Additionally, \texttt{K2-CBP} experiences further execution time overhead due to the jumps required when encountering a pruned subtree and needing to move accordingly.}

% However, in terms of compression, our proposed method, \texttt{K2-CBP}, which compresses the subtrees of \texttt{K2-BP}, consistently achieves the lowest number of bits per nonzero element across all cases. These results demonstrate that, in practice, the structural properties of such matrices lead to the creation of numerous identical subtrees. This characteristic enables \texttt{K2-CBP} to effectively compress the tree, reducing the space by approximately $4$--$4.5$ bits per nonzero element compared to \texttt{K2-BP}. 

% Compared to \texttt{K2-TREE}, \texttt{K2-CBP} requires approximately $0.35$--$0.5$ fewer bits per nonzero element. Additionally, \texttt{K2-PDF} achieves an even greater reduction in bits per nonzero element, using approximately $0.15$--$0.18$ fewer bits than \texttt{K2-CBP}, as it does not require support for the \texttt{rank} operation.

\begin{table}
    \begin{center}
    \begin{tabular}{|l||c|c||c|c||c|c||c|c||c|c|}
    \hline
       \multirow{2}{*}{Dataset} & \multicolumn{2}{c||}{\texttt{K2-TREE}} & \multicolumn{2}{c||}{\texttt{K2-PDF}} &  \multicolumn{2}{c||}{\texttt{K2-EDF}} &\multicolumn{2}{c||}{\texttt{K2-BP}} & \multicolumn{2}{c|}{\texttt{K2-CBP}}\\
    \cline{2-11}
     & bits & min & bits & min & bits & min & bits & min & bits & min \\
     \hline
     IN-2004 & 3.12 & 1.07 & 2.93 & 2.20 & 2.97 & 1.26 & 7.80 & 2.61 & 2.90 & 4.40\\ 
     \hline
     ID-2004 & 2.56 & 125 & 2.41 & 54 & 2.42 & 34 & 6.20 & 124 & 1.87 & 164\\ 
     \hline
     AR-2005 & 2.92 & 85 & 2.75 & 291 & 2.75 & 134 & 7.24 & 254 & 2.37 & 335\\ 
     \hline
     UK-2005 & 2.89 & 89 & 2.72 & 901 & 2.73 & 429 & 7.16 & 425 & 2.35 & 909\\ 
     \hline
    \end{tabular}
    \end{center}
    \caption{Results for matrix multiplication for web graph matrices; table reports the compressed matrix size in bits per nonzero and running time in minutes.}
    \label{tab:webgraph_result}
\end{table}

%\begin{table}
%    \begin{center}
%    \begin{tabular}{|l||c|c||c|c||c|c||c|c||c|c||c|c||c|c|}
%    \hline
%       \multirow{2}{*}{Dataset} & \multicolumn{2}{c||}{\texttt{K2-TREE}} & \multicolumn{2}{c||}{\texttt{K2-PDF}} & \multicolumn{2}{c||}{\texttt{K2-CPDF (P32)}}  & \multicolumn{2}{c||}{\texttt{K2-CPDF (PW)}}&  \multicolumn{2}{c||}{\texttt{K2-EDF}} &\multicolumn{2}{c||}{\texttt{K2-BP}} & \multicolumn{2}{c|}{\texttt{K2-CBP}}\\
%    \cline{2-15}
%     & bits & min & bits & min & bits & min & bits & min & bits & min & bits & min & bits & min \\
%     \hline
%     IN-2004 & 3.12 & 1.07 & 2.93 & 2.20 & 2.06 & min & 1.97 & min & 2.97 & 1.26 & 7.80 & 2.61 & 2.90 & 4.40\\ 
%     \hline
%     ID-2004 & 2.56 & 125 & 2.41 & 54 & 1.34 & min & 1.28 & min & 2.42 & 34 & 6.20 & 124 & 1.87 & 164\\ 
%     \hline
%     AR-2005 & 2.92 & 85 & 2.75 & 291 & 1.73 & min & 1.67 & min & 2.75 & 134 & 7.24 & 254 & 2.37 & 335\\ 
%     \hline
%     UK-2005 & 2.89 & 89 & 2.72 & 901 & 1.69 & min & 1.64 & min & 2.73 & 429 & 7.16 & 425 & 2.35 & 909\\ 
%     \hline
%    \end{tabular}
%    \end{center}
%    \caption{Results for matrix multiplication for web graph matrices; table reports the compressed matrix size in bits per nonzero and running time in minutes.}
%    \label{tab:webgraph_result}
%\end{table}

\begin{figure}
    \begin{center}
    \begin{tikzpicture}[scale=0.7]
     \begin{axis}[
            xlabel={Density},
            ylabel={Average time in second},
            title={Multiplication},
            log ticks with fixed point,
            x dir=reverse,
            xmode=log,
            xtick={1e-1, 1e-2, 1e-3, 1e-4},
            legend pos=south west
        ]
            %nav
            \addplot[sharp plot,mark=o, mark size=4pt,red] plot coordinates {
                (2e-1,8.89626301) (1e-1, 4.11452052) (1e-2, 0.18019336) (1e-3, 0.01881509) (1e-4, 0.00502304)
            };

            %pdft
            \addplot[sharp plot,mark=+, mark size=4pt,blue] plot coordinates {
                (2e-1, 3.92592779) (1e-1, 2.94502825) (1e-2, 0.31178314) (1e-3, 0.03191211) (1e-4, 0.00646658)
            };

            %edf
            \addplot[sharp plot, mark=triangle, mark size = 4pt, blue] plot coordinates {
                (2e-1, 3.80699486) (1e-1, 2.83599754) (1e-2, 0.29608632) (1e-3, 0.03044228) (1e-4, 0.00608669)
            };

            %bp
            \addplot[sharp plot, mark=+, mark size = 4pt] plot coordinates {
                (2e-1, 15.60183152) (1e-1, 10.62632997) (1e-2, 0.78554955) (1e-3, 0.05015245) (1e-4, 0.00963482)
            };

            %cbp
            \addplot[sharp plot, mark=star, mark size = 4pt] plot coordinates {
                (2e-1, 22.03675870) (1e-1, 14.59138857) (1e-2, 2.16418309) (1e-3, 0.09817016) (1e-4, 0.01170682)
            };
        \end{axis}
    \end{tikzpicture}
    \end{center}
    \begin{center}
    \begin{tikzpicture}[scale=0.7]
        \begin{axis}[%
            hide axis, xmin=10, xmax=50, ymin=0, ymax=0.4,
            legend style={draw=white!15!black,legend cell align=left}, legend columns=3,
            ]
            \addlegendimage{mark=o, mark size=4pt,color = red}
            \addlegendentry{\texttt{K2-TREE}};
            \addlegendimage{mark=+, mark size=4pt,color = blue}
            \addlegendentry{\texttt{K2-PDFT}};
            \addlegendimage{mark=triangle, mark size = 4pt, color = blue}
            \addlegendentry{\texttt{K2-EDF}};
            \addlegendimage{mark=+, mark size = 4pt, color=black}
            \addlegendentry{\texttt{K2-BP}};
            \addlegendimage{mark=star, mark size = 4pt, color=black}
            \addlegendentry{\texttt{K2-CBP}};
        \end{axis}
    \end{tikzpicture}
    \end{center}
    \caption{Average computation time for multiplying randomly generated uniform-distributed matrices as a function of their density}
    \label{plot:random_mult}
\end{figure}
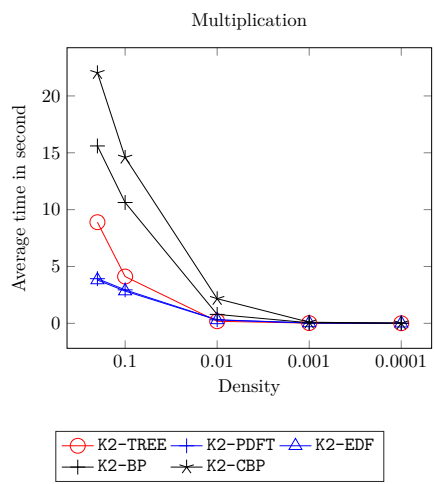

For matrix-matrix multiplication on randomly generated matrices the results in Figure~\ref{plot:random_mult} indicate that \texttt{K2-PDF} and \texttt{K2-EDF} outperform the classical \texttt{K2-TREE}, particularly when the matrix density is higher. In contrast, when the matrices are more sparse, all methods exhibit similar performance, as the multiplication time remains low due to the relatively small size of the matrices tested.
This behavior can be explained by the impact of density on the tree structure. A slight increase in density results in larger subtrees, requiring the traversal of a greater number of them. In such scenarios, \texttt{K2-PDF} and \texttt{K2-EDF} benefit from their simpler representations, which allow them to read matrices sequentially from left to right. Conversely, the \texttt{K2-TREE} must perform an additional \texttt{rank} operation to determine the starting position of the four submatrices, introducing an overhead that affects performance.

% need to write this space values
% https://docs.google.com/spreadsheets/d/1FZu7GY-rFTl7Oc-KiFrblbXTcaa_pnQQl9gkqtQ3A78/edit?usp=sharing
\begin{table}
    \begin{center}
    \begin{tabular}{|l||r|r||r|r||r|r||r|r||r|r|}
    \hline
       \multirow{2}{*}{Dens.} & \multicolumn{2}{c||}{\texttt{K2-TREE}} &  \multicolumn{2}{c||}{\texttt{K2-PDF}} &  \multicolumn{2}{c||}{\texttt{K2-EDF}} &\multicolumn{2}{c||}{\texttt{K2-BP}} & \multicolumn{2}{c|}{\texttt{K2-CBP}} \\
    \cline{2-11}
     & bits & sec & bits & sec & bits & sec & bits & sec & bits& sec\\
     \hline
     0.2       & 4.86 & 8.90 & 4.59 & 3.93 & 4.91 & 3.81 & 13.86 & 15.60 & 11.98 & 22.04\\ 
     \hline
     0.1       & 6.73 & 4.11 & 6.32 & 2.95 & 6.95 & 2.84 & 19.37 & 10.63 & 17.70 & 14.59\\ 
     \hline
     $10^{-2}$ & 13.58 & 0.18 & 12.61 & 0.31 & 14.25 & 0.30 & 41.98 & 0.79 & 23.45 & 2.16\\ 
     \hline
     $10^{-3}$ & 22.14 & 0.02 & 19.25 & 0.03 & 30.80 & 0.03 & 55.19 & 0.05 & 43.29 & 0.10\\ 
     \hline
     $10^{-4}$ & 44.03 & 0.01 & 25.85 & 0.01 & 58.17 & 0.01 & 109.02 & 0.01 & 149.14 & 0.01\\ 
     \hline
    \end{tabular}        
    \end{center}
    \caption{Results for matrix multiplication for random matrices; table reports average total space by the data structure in bits per nonzero and average running time in seconds.}
    \label{tab:random_mult}
\end{table}
\ignore{
\begin{table}
    \begin{center}
    \begin{tabular}{|l||r|r||r|r||r|r||r|r||r|r|}
    \hline
       \multirow{2}{*}{Dens.} & \multicolumn{2}{c||}{\texttt{K2-TREE}} &  \multicolumn{2}{c||}{\texttt{K2-PDF}} &  \multicolumn{2}{c||}{\texttt{K2-EDF}} &\multicolumn{2}{c||}{\texttt{K2-BP}} & \multicolumn{2}{c|}{\texttt{K2-CBP}} \\
    \cline{2-11}
     & bits & sec & bits & sec & bits & sec & bits & sec & bits& sec\\
     \hline
      $10^{-1}$ & 6.71 & 4085.90& 6.31 & & 6.35 & 1445.50 & 17.44 & & 8.39 & \\ 
     \hline
     $10^{-2}$ & 13.39 & 81.1 & 12.59 & 146.80 & 12.98 & 142.73 & 34.68 & 374.82 & 19.67 & 1392.93\\ 
     \hline
     $10^{-3}$ & 20.42 & 2.32 & 19.18 & 6.83 & 20.15 & 6.52 & 54.48 & 14.10 & 32.35 & 50.16\\ 
     \hline
     $10^{-4}$ & 27.67 & 0.08 & 25.84 & 0.28 & 28.42 & 0.25 & 77.01 & 0.48 & 44.96 & 1.01\\ 
     \hline 
    \end{tabular}        
    \end{center}
    \caption{Results for matrix multiplication for random matrices; table reports average total space by the data structure in bits per nonzero and average running time in seconds.}
    \label{tab:random_mult2}
\end{table}}

In terms of compression, considering  that $k^2$-trees were introduced to exploit the clustering of nonzeros, it is interesting that 
even when the nonzero entries are uniformly distributed the results in  Table~\ref{tab:random_mult} show that there is some compression. We observe that \texttt{K2-PDF} uses fewer bits per nonzero element compared to \texttt{K2-TREE}, with the difference becoming more pronounced as the matrix density decreases. This is because \texttt{K2-PDF} does not require auxiliary data structures for its operations. This effect is further reinforced when examining the space usage of \texttt{K2-EDF}, which consumes more bits per nonzero element than both \texttt{K2-TREE} and \texttt{K2-PDF}. 
As expected, the representations \texttt{K2-BP} and \texttt{K2-CBP} require more space due to the BP representation and the additional data structures needed for navigation. However, a surprising observation is that even with uniformly distributed values, \texttt{K2-CBP} consistently requires fewer bits than \texttt{K2-BP} across all densities except for matrices with a density of $10^{-4}$. At such a low density, the number of nonzero elements is minimal, making it unlikely to find identical subtrees within a matrix of this size.

\section{Conclusions and further works}\label{sec:conc}

In this paper we have introduced new representations of $k^2$-trees in which nodes are encoded using a depth-first layout. The rationale is that such layout is more cache friendly and can be used to detect and exploit the presence of identical subtrees. Our experimental results on trees representing the adjacency matrices of web graphs have shown that repeated subtrees are indeed present and our Compressed Balanced Parenthesis representation is the one obtaining the best compression, even if at the cost of higher running times for the matrix-matrix multiplication. Our results also show that if the input matrix is not too dense our Enriched Depth First representation takes less space and is faster than the canonical $k^2$-tree representation. 
Although still preliminary, our results show that depth first layouts can bring advantages for some classes of matrices and operations, and we plan to pursue this idea along the following lines:
\begin{itemize}
    \item we plan to implement the search of identical subtrees also for the Enriched Depth First representation: we notice that instead of working at the single bit level is it possible to do the computation considering the input as an array of blocks of $k^2$ bits;
    \item we plan to improve the performance of our balanced parenthesis representations by writing our implementation of the \findc\ operation: at the moment we are relying on the sdsl-lite library which is not space efficient since it supports also other BP operations;
    \item we plan to experiment with the DFUDS representation with possibly some of the improvements from~\cite{jansson2012};
    \item because of its recursive nature, matrix multiplcation can be easily parallelized: we plan to extend our comparison to a multithread setting;
    \item we are currently exploiting the presence of repeated subtrees only to save space; since they corresponds to identical submatrices  
    we plan to investigate whether they can lead to a speed-up for some matrix operations.
\end{itemize}

\ignore{
\begin{table}[htbp]
    \centering
    \begin{tabular}{|l||c|c||c|c||c|c||c|c||c|c|}
    \hline
       \multirow{2}{*}{Dataset} & \multicolumn{2}{c||}{\texttt{K2-TREE}} &  \multicolumn{2}{c||}{\texttt{K2-PDF}} &  \multicolumn{2}{c||}{\texttt{K2-EDF}} &\multicolumn{2}{c||}{\texttt{K2-BP}} & \multicolumn{2}{c|}{\texttt{K2-CBP}} \\
    \cline{2-11}
     & bits/\#nz & min & bits/\#nz & min & bits/\#nz & min & bits/\#nz & min & bits/\#nz & min\\
     \hline
     ID-2004 & 56.0159 & 125 & 40.8670 & 54 & 40.8911 & 34 & 180.5587 & 124 & 497.1858 & 187 \\
     \hline
     AR-2005 & 75.0821 & 85 & 50.2793 & 291 & 50.2929 & 134 & 251.3832 & 254 & - &  -\\ 
     \hline
     UK-2005 & 51.4336 & 89 & 33.3504 & 901 & 33.3619 & 429 & 486.5864 & 425 & - & -\\ 
     \hline
    \end{tabular}
    \caption{Results for matrix multiplication; table reports peak memory usage in bits per nonzero and running time in minutes.}
    \label{tab:my_label}
\end{table}}

\ignore{
\begin{center}
\begin{figure}
    \begin{subfigure}{0.47\textwidth}
        \centering
        \begin{tikzpicture}[scale=0.7]
        \begin{axis}[
            ymode = log,
            xlabel={Space in bits per one},
            ylabel={Time in seconds},
            title={$M^2$, INDOCHINA-2004},
            xmin=0, xmax=7,
            xtick={0, 1, 2, 3, 4, 5, 6, 7},
            ymin=1000, ymax=100000,
                    ]
            \addplot[sharp plot,mark=o, mark size=4pt,red] plot coordinates {
            (2.5586, 7486.24)
            };
        
            \addplot[sharp plot,mark=+, mark size=4pt,blue] plot coordinates {
            (2.4059, 3262.16)
            };
            
            \addplot[sharp plot, mark=triangle, mark size = 4pt, blue] plot coordinates {
            (2.4167, 2048.87)
            };
        
            \addplot[sharp plot, mark=+, mark size = 4pt] plot coordinates {
            (6.19922, 7229.59)
            };
            
            \addplot[sharp plot, mark=star, mark size = 4pt] plot coordinates {
            (2.0368, 11207.58)
            };
        \end{axis}
        \end{tikzpicture}
    \end{subfigure}
    \begin{subfigure}{0.47\textwidth}
        \centering
        \begin{tikzpicture}[scale=0.7]
        \begin{axis}[
            ymode = log,
            xlabel={Space in bits per one},
            ylabel={Time in seconds},
            title={$M^2$, ARABIC-2005},
            xmin=0, xmax=7,
            xtick={0, 1, 2, 3, 4, 5, 6, 7},
            ymin=1000, ymax=100000,
                    ]
            
            \addplot[sharp plot,mark=o, mark size=4pt,red] plot coordinates {
            (2.9216, 5073.43)
            };
            \addplot[sharp plot,mark=+, mark size=4pt,blue] plot coordinates {
            (2.7472, 17446.17)
            };
        
            \addplot[sharp plot, mark=triangle, mark size = 4pt, blue] plot coordinates {
            (2.7540, 8013.12)
            };
        
            \addplot[sharp plot, mark=+, mark size = 4pt] plot coordinates {
            (7.24141, 15249.71)
            };
            
            \addplot[sharp plot, mark=star, mark size = 4pt] plot coordinates {
            (2.5512, 100000)
            };
        \end{axis}
        \end{tikzpicture}
    \end{subfigure}
    \begin{subfigure}{1\textwidth}
        \centering
        \begin{tikzpicture}[scale=0.7]
        \begin{axis}[
            ymode = log,
            xlabel={Space in bits per one},
            ylabel={Time in seconds},
            title={$M^2$, UK-2005},
            xmin=0, xmax=7,
            xtick={0, 1, 2, 3, 4, 5, 6, 7},
            ymin=1000, ymax=100000,
                    ]
            \addplot[sharp plot,mark=o, mark size=4pt,red] plot coordinates {
            (2.8941, 5364.65)
            };
            \addplot[sharp plot,mark=+, mark size=4pt,blue] plot coordinates {
            (2.7214, 54048.64)
            };

            \addplot[sharp plot, mark=triangle, mark size = 4pt, blue] plot coordinates {
            (2.7276, 25758.21)
            };
        
            \addplot[sharp plot, mark=+, mark size = 4pt] plot coordinates {
            (7.1606, 25511.32)
            };
            
            \addplot[sharp plot, mark=star, mark size = 4pt] plot coordinates {
            (2.5287, 100000)
            };
        \end{axis}
        \end{tikzpicture}
    \end{subfigure}

    \begin{center}
    \begin{tikzpicture}[scale=0.7]
        \begin{axis}[%
            hide axis, xmin=10, xmax=50, ymin=0, ymax=0.4,
            legend style={draw=white!15!black,legend cell align=left}, legend columns=3,
            ]
            \addlegendimage{mark=o, mark size=4pt,color = red}
            \addlegendentry{\texttt{K2-TREE}};
            \addlegendimage{mark=+, mark size=4pt,color = blue}
            \addlegendentry{\texttt{K2-TREE-PDFT}};
            \addlegendimage{mark=triangle, mark size = 4pt, color = blue}
            \addlegendentry{\texttt{K2-TREE-EDF}};
            \addlegendimage{mark=+, mark size = 4pt, color=black}
            \addlegendentry{\texttt{K2-TREE-BP}};
            \addlegendimage{mark=star, mark size = 4pt, color=black}
            \addlegendentry{\texttt{K2-TREE-CBP}};
        \end{axis}
    \end{tikzpicture}
    \end{center}
\end{figure}
\end{center}}

\bibliography{lipics-v2021-sample-article}

\end{document}